\newcommand{\braket}[2]{\langle#1|#2\rangle}
\newcommand{\ket}[1]{\left|#1\right\rangle}
\newtheorem{proposition}{Proposition}
\newtheorem{definition}{Definition}
\newtheorem{theorem}{Theorem}
\definecolor{juliagreen}{HTML}{389826}
\lstdefinestyle{JuliaREPL}{
  basicstyle=\ttfamily\small,
  backgroundcolor=\color{gray!10},
  frame=single,
  columns=fullflexible,
  showstringspaces=false,
  literate={julia> }{{{\color{juliagreen}julia> }}}1
}
\crefname{algocf}{algorithm}{algorithms}
\Crefname{algocf}{Algorithm}{Algorithms}
\begin{document}

\title{Encoding computationally hard problems in triangular Rydberg atom arrays}

\author{Xi-Wei Pan}
\affiliation{Thrust of Advanced Materials, The Hong Kong University of Science and Technology (Guangzhou), Guangdong, China}
\author{Huan-Hai Zhou}
\affiliation{Thrust of Advanced Materials, The Hong Kong University of Science and Technology (Guangzhou), Guangdong, China}
\author{Yi-Ming Lu}
\affiliation{Zhili College, Tsinghua University, Beijing, China}
\author{Jin-Guo Liu}
\email{jinguoliu@hkust-gz.edu.cn}
\affiliation{Thrust of Advanced Materials, The Hong Kong University of Science and Technology (Guangzhou), Guangdong, China}

\date{\today}

\begin{abstract}
    Rydberg atom arrays are a promising platform for quantum optimization, encoding computationally hard problems by reducing them to independent set problems with unit-disk graph topology. In \href{https://doi.org/10.1103/PRXQuantum.4.010316}{[Nguyen et al., PRX Quantum 4, 010316 (2023)]}, a systematic and efficient strategy was introduced to encode multiple problems into a special unit-disk graph: the King's subgraph.
    However, King's subgraphs are not the optimal choice in two dimensions. Due to the power-law decay of Rydberg interaction strengths, the approximation to unit-disk graphs in real devices is poor, necessitating post-processing that lacks physical interpretability.
    In this work, we develop an encoding scheme that can universally encode computationally hard problems on triangular lattices, based on our innovative automated gadget search strategy.
    Numerical simulations demonstrate that quantum optimization on triangular lattices reduces independence-constraint violations by approximately two orders of magnitude compared to King's subgraphs, substantially alleviating the need for post-processing in experiments. 
\end{abstract}

\maketitle

\paragraph{Introduction.}— Classical algorithms for NP-hard optimization problems scale exponentially~\cite{schrijver2003combinatorial,bernhard2008combinatorial,Moore2011}, motivating the search for quantum approaches with potential advantages~\cite{farhi_quantum_2000,Farhi_2001,Kadowaki_1998,Das_2008,lidar_albash_review_2018,farhi_quantum_2014,Lucas_Ising_Formulation}. Rydberg atom arrays are a promising platform for quantum optimization, as blockade interactions directly enforce independence constraints~\cite{saffman2010quantum,Pichler2018MIS,barik2024quantum,wurtz2023aquila,browaeys2020many}. This naturally maps to the maximum independent set (MIS) problem on unit-disk graphs—graphs where vertices are connected if separated by less than a fixed distance—an NP-hard problem~\cite{pichler2018computational,kim2024quantum,schuetz2025qredumis}. With tunable local detunings, this mapping extends to the maximum weighted independent set (MWIS), also NP-hard~\cite{de2025demonstration,schuetz2025qredumis}. Recent experiments have even reported signatures of superlinear quantum speedup for MIS on King's subgraphs (KSGs)—a restricted subclass of unit-disk graphs—with up to 289 qubits~\cite{Ebadi2022}, highlighting the central role of KSG embeddings in current Rydberg-atom implementations.

To make Rydberg atoms useful for practical optimization problems, Nguyen et al.~\cite{Nguyen2023} introduced the use of KSGs to encode independent set problems with arbitrary connectivity. The key idea is to use \emph{gadgets}, small atomic arrangements that locally enforce constraints while respecting unit-disk connectivity.
This approach has a provably optimal vertex overhead of $O(n^2)$, where $n$ is the number of vertices in the original graph.

The main critique of KSG-based quantum optimization~\cite{Bombieri2025,Cazals2025,cazals2025quantum,Ebadi2022} is that the approximation to unit-disk graphs is poor, leading to the need for post-processing that lacks explainability.
In Ref.~\cite{Ebadi2022}, a significant portion of independence constraints were observed to be violated, with many output configurations having cardinalities exceeding that of the maximum independent set. Extensive post-processing, including removing vertices that violate the independence constraint and adding new vertices greedily, is necessary.
Partly due to this, the observed superlinear speedup is not completely convincing.  

A convincing demonstration of quantum speedup should rest on a robust encoding scheme that avoids post-processing, which could be achieved by changing the lattice type, e.g., to a triangular lattice.
In this letter, we introduce a key insight: the \emph{quality factor} $Q = R_{\text{min}}/r_{\text{max}}$ quantifies encoding robustness, where $R_{\text{min}}$ is the minimum distance between non-adjacent atoms and $r_{\text{max}}$ is the maximum distance between adjacent atoms. 
Since Rydberg interactions between atoms separated by distance $r$ scale as $V(r)\propto 1/r^6$, the energetic separation between connected and non-connected pairs scales as $Q^6$.
KSGs yield only $Q = \sqrt{2}$, corresponding to an energetic separation of $Q^6=8$, 
whereas triangular lattices naturally provide $Q = \sqrt{3}$, giving $Q^6=27$—more than three times larger (see Supplemental Material for details). 
This enhanced penalty substantially suppresses constraint violations and mitigates unwanted long-tail effects~\cite{Cazals2025}.
The higher quality factor is also widely believed to be beneficial for quantum simulation of exotic phases~\cite{Zeng2025,Patil2025,Li2022}.

Yet, a general encoding scheme for triangular-lattice subgraphs (TLSGs) is lacking. 
Here TLSGs refer to unit-disk graphs constrained to the triangular lattice. 
It is even unknown whether MWIS on TLSGs is NP-hard~\cite{Moore2011}, 
casting doubt on universal encodability. A key obstacle is the absence of a systematic gadget-search framework. In particular, gadgets such as the crossing gadget serve as fundamental building blocks in encoding constructions. Ref.~\cite{Nguyen2023} identified an optimal 8-vertex crossing gadget for KSGs, but the brute-force search employed there quickly becomes infeasible for larger gadgets. For TLSGs, no valid crossing gadget had been reported until now.

In this letter, we develop an innovative automated gadget-search strategy, discover a 12-vertex crossing gadget for TLSGs, construct a universal encoding for NP-hard problems using this gadget, and numerically show that independence-constraint violations are reduced by nearly two orders of magnitude compared with KSG encodings. These results suggest that post-processing may be unnecessary for TLSG encodings.

\begin{figure*}[htbp]
    \centering
    \includegraphics[width=0.95\linewidth]{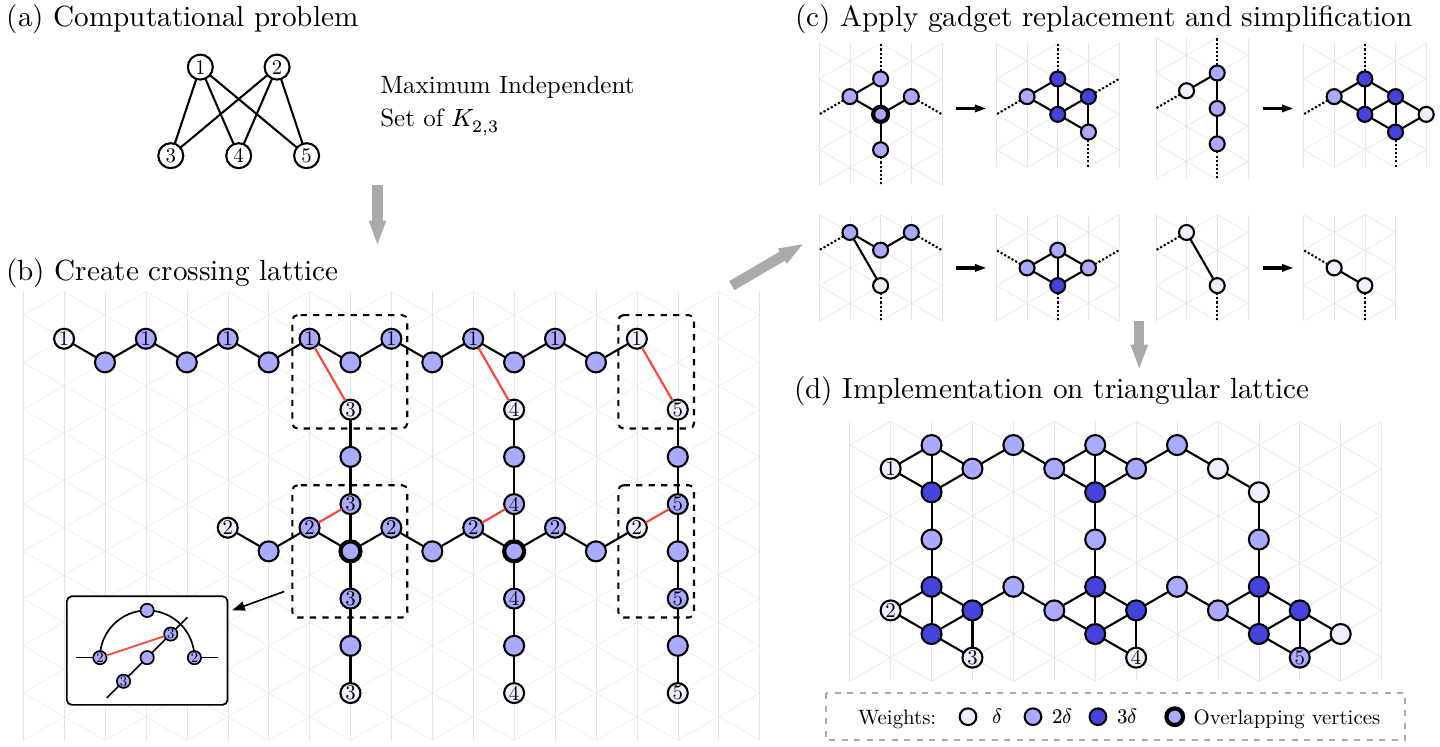}
    \caption{Procedure for encoding a MWIS/MIS problem on an arbitrary graph into MWIS on a TLSG, enabling optimization using programmable Rydberg atom arrays.
    (a) Example problem: finding MIS of $K_{2,3}$ graph.
    (b) Extend each vertex to a copy gadget and form a crossing lattice; numbers label equivalent copies of the same source vertex, and red edges indicate the original-graph connections.
    (c) Replace the substructures that violate the unit-disk constraint by logically equivalent gadgets. The gadgets and their composition rule are detailed in Fig.~\ref{fig:show_gadget}.
    (d) Final encoding of $K_{2,3}$; the solution to the original problem is inferred from the ground state configuration of the numbered nodes.
    }
    \label{fig:main}
\end{figure*}

\paragraph{Encoding scheme.}—
The MIS or MWIS problem on unit-disk graphs can be cast as a ground-state search for an effective Hamiltonian implemented by Rydberg atom arrays~\cite{Nguyen2023,de2025demonstration}. The blockade interaction enforces the independence constraint by forbidding simultaneous excitations of nearby atoms, while detunings encode vertex weights. Let $\Delta_v$ denote the detuning of atom $v$ and $R_b$ the blockade radius. The effective classical Hamiltonian is
\begin{equation}
H = -\sum_v \Delta_v n_v + \sum_{|\mathbf{r}_{u} - \mathbf{r}_{v}| < R_b} \infty\, n_u n_v,
\end{equation}
where $n_v$ is the number operator of atom $v$, and $\mathbf{r}_u$ and $\mathbf{r}_v$ are the atomic positions. The infinite interaction crudely approximates the fast-decaying van der Waals potential $\sim 1/r^6$ and enforces the hard constraint.

Since many computational problems can be reduced to MIS or MWIS on graphs~\cite{gao2025programming}, a central challenge is encoding graphs with arbitrary connectivity into hardware-native unit-disk geometries. In this letter, we focus on the TLSG and propose an efficient encoding framework. 

Consider a general graph \(G = (V, E)\) with vertex set $V$ and edge set $E$.  

\begin{theorem}
    The problem of finding a maximum independent set on a general
graph $G= (V,E)$ can be encoded into that on a TLSG with $O(|V|^2)$ vertices.
\label{thm:main}
\end{theorem}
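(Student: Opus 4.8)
The plan is to give a constructive polynomial-size reduction, following the gadget-composition paradigm of Ref.~\cite{Nguyen2023} but re-engineered for the triangular lattice. First I would realize each vertex $v \in V$ by a \emph{copy line}: a chain of triangular-lattice atoms, equipped with appropriate detunings, whose local independent-set structure forces every atom on the line to share a common occupation, so that a single logical bit $x_v \in \{0,1\}$ is propagated faithfully along the line. I would then lay out the $|V|$ copy lines as a \emph{crossing lattice}—a grid in which every pair of lines meets at exactly one intersection—yielding $\Theta(|V|^2)$ crossing sites arranged on the triangular lattice, as in Fig.~\ref{fig:main}(b).

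At each intersection two logical bits must pass, and the substructure there generically violates the unit-disk (blockade) constraint; the second step is to replace every such substructure by a logically equivalent gadget that does respect the constraint. Two gadget types are needed: a \emph{crossing gadget}, which lets two copy lines cross while keeping their logical values independent (no spurious edge), and a \emph{connection gadget}, which instead enforces the independence constraint $x_u x_v = 0$ exactly when $(u,v) \in E$. The workhorse is the 12-vertex triangular-lattice crossing gadget established earlier in this letter. For each gadget I would prove a local equivalence lemma: restricted to its boundary ports, the set of maximum-weight independent-set configurations is in bijection with the intended logical relation, and each admissible configuration contributes a fixed, known weight. Because the atoms lie on the triangular lattice, the quality factor $Q = \sqrt{3}$ guarantees that non-adjacent atoms are never closer than adjacent ones, so the blockade graph is genuinely the unit-disk graph of the layout and no unintended edges are introduced.

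The third step is to glue the gadgets into a single weighted TLSG and prove global correctness. Choosing the detunings so that every deviation from a gadget's intended ground-state manifold strictly lowers the attainable weight, I would show that the global maximum-weight independent sets are exactly those in which each copy line carries a consistent bit and the connection gadgets enforce all edge constraints; this yields a weight-preserving bijection (up to a computable additive constant $C$) between independent sets of $G$ and maximum-weight independent sets of the TLSG, so that $\mathrm{MIS}(G)$ is recovered from the occupations of the numbered ports as in Fig.~\ref{fig:main}(d). Since the crossing lattice has $O(|V|^2)$ cells and each gadget uses $O(1)$ atoms, the total vertex count is $O(|V|^2)$, matching the claimed overhead.

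I expect the main obstacle to be the correctness of the crossing gadget and its composition, rather than the bookkeeping of the vertex count. On the triangular lattice the admissible atom placements are far more rigid than on the King's graph, so one must verify that the 12-vertex gadget simultaneously (i) keeps the two crossing logical values fully decoupled across \emph{all} of its maximum-weight configurations, (ii) matches boundary detunings with the adjoining copy lines so that no defect is energetically favored, and (iii) introduces no long-range blockade edge to neighboring gadgets once tiled. Establishing this local-to-global consistency—that the naive sum of per-gadget weights equals the true global optimum with no cross-gadget interference—is the delicate core of the argument; the remaining steps are then routine.
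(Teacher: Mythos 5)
Your proposal follows essentially the same route as the paper: build a crossing lattice of copy lines (one per vertex, meeting pairwise at $\Theta(|V|^2)$ crossings), replace each crossing by a constant-size logically equivalent gadget (the 12-vertex crossing gadget, or a crossing-with-edge gadget when $(u,v)\in E$), and invoke the gadget composition principle to get global correctness with $O(|V|^2)$ total vertices. One minor correction: the copy line does not force all its atoms to share a common occupation—it is a chain of NOT gadgets whose ground states are the two alternating $\mathbb{Z}_2$-ordered configurations, so only the odd-indexed vertices carry the logical bit $x_v$; this does not affect the validity of your argument.
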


\begin{proof}
We prove this theorem by constructing a two-step mapping scheme, illustrated in Fig.~\ref{fig:main}. In Step 1, each logical vertex of the source graph \(G=(V,E)\) is replaced by a vertex wire consisting of $O(|V|)$ vertices, corresponding to the copy gadget in Ref.~\cite{Nguyen2023}. The collection of these vertex wires forms a two-dimensional \emph{crossing lattice}, where crossings occur at specific sites for any $(u,v) \in E$. Vertices with odd indices along the same wire represent equivalent copies of the same source vertex, allowing an edge in the source graph to be redistributed to any pair of equivalent vertices. The redistributed edges, shown as red lines in Fig.\ref{fig:main}(b), are placed at the crossings of the lattice.

In Step 2, we use specialized gadgets with constant size to replace crossings in the crossing lattice (Fig.~\ref{fig:main}(c)), ensuring that all vertices respect the unit-disk constraint, while preserving the solution equivalence. 
Details on how to design such gadgets are discussed in the following section.
The generated graph can be further simplified, e.g., by trimming the dangling legs. The resulting TLSG encoding is shown in Fig.~\ref{fig:main}(d), with further details provided in the Supplemental Material.
\end{proof}

For a graph with $n$ vertices, our TLSG encoding scheme introduces an overhead of at most $O(8n^2)$ vertices (see Supplemental Material for details), somewhat higher than the $O(4n^2)$ reported for KSG encodings~\cite{Nguyen2023,schuetz2025quantum}, but still quadratic. Potential optimizations, such as vertex reordering~\cite{Nguyen2023} and further simplifications, can reduce the required number of atoms.

\paragraph{Gadget finding.}
— This section introduces the gadget search methodology, applicable to graphs both with and without geometric constraints, including, but not limited to, TLSGs.

A gadget is a small positive-weighted graph whose MWIS ground states realize a prescribed logical relation among a selected subset of vertices. Such relation can be expressed as a $k$-variable logical constraint with satisfying set $\mathcal{L} \subseteq \{0,1\}^k$. For example, a 2-to-1 AND gadget encodes the constraint $\mathcal{L} = \{000, 010,100,111\}$.

Gadgets are basic building blocks in our encoding scheme due to their \emph{composability}~\cite{Nguyen2023}: 
MWIS ground states of multiple gadgets can be combined to realize conjunctions of multiple logical constraints (see Fig.~\ref{fig:show_gadget}(d)).  

\begin{definition}[Logical equivalence]
    Let $G_\mathcal{L}=(V,E,\Delta)$ be a weighted graph with positive vertex weights 
    $\Delta = \{\Delta_v\}_{v\in V}$, and let $P = \{p_1, p_2, \dots, p_k\} \subseteq V$ 
    be an ordered set of \emph{pin vertices}. 
    
    For each MWIS $M \subseteq V$, define the pin-projection 
    \[
    \pi_P(M)_i =
    \begin{cases}
    1, & p_i \in M, \\
    0, & p_i \notin M,
    \end{cases}
    \quad i=1,\dots,k .
    \]
    The set of projected MWIS configurations is
    \[
    \mathcal{M}_{\mathrm{MWIS}}^P(G_\mathcal{L})
    := \{\pi_P(M)\mid M\in \mathcal{M}_{\mathrm{MWIS}}(G_\mathcal{L})\}
    \subseteq \{0,1\}^k .
    \]
    
    We say the MWIS solutions of $G_\mathcal{L}$ are \emph{logically equivalent on $P$} 
    to a $k$-variable logical constraint $\mathcal{L}\subseteq \{0,1\}^k$ if
    \[
    \mathcal{M}_{\mathrm{MWIS}}^P(G_\mathcal{L}) \;=\; \mathcal{L}.
    \]
    \end{definition}
    
    \begin{definition}[Gadget]
        A \emph{gadget} for a logical constraint $\mathcal{L}$ is a positive-weighted graph 
        $G_\mathcal{L}=(V,E,\Delta)$ with designated pins $P\subseteq V$ such that 
        its MWIS solutions are logically equivalent to $\mathcal{L}$ on $P$. 
        In particular, the gadget must be \emph{non-degenerate}, meaning that each 
        assignment in $\mathcal{L}$ corresponds to exactly one MWIS solution.
    \end{definition}

\begin{proposition}[MWIS is maximal]
    \label{thm:mwis}
    For any positive-weighted graph $G = (V, E, \Delta)$, a MWIS is always \emph{maximal}; that is, no additional vertex can be added without violating independence.
\end{proposition}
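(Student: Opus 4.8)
The plan is to prove the statement by contradiction, leaning entirely on the hypothesis that every vertex weight is strictly positive. Suppose, for contradiction, that $M$ is an MWIS of $G$ that fails to be maximal. By the definition of maximality, there then exists a vertex $v \in V \setminus M$ whose inclusion preserves independence, i.e.\ $v$ has no neighbor inside $M$, so that $M' := M \cup \{v\}$ is again an independent set.

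Next I would compare total weights. Writing $W(\cdot)$ for the sum of vertex weights over a set, additivity gives $W(M') = W(M) + \Delta_v$. Since $\Delta_v > 0$ by the positive-weight assumption, it follows that $W(M') > W(M)$, exhibiting an independent set of strictly larger weight than $M$. This contradicts the defining property of $M$ as a \emph{maximum}-weight independent set, so no such $v$ can exist; equivalently, every vertex outside $M$ has a neighbor in $M$, which is precisely the assertion of maximality.

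There is no genuine computational obstacle here: the whole content of the argument is the observation that positivity of the weights is the load-bearing hypothesis, invoked at the single step $\Delta_v > 0$. The main thing I would take care to flag is that the claim is \emph{false} without this assumption—a vertex of weight zero could always be appended to any independent set without changing its total weight, so a maximum-weight set need not be maximal in that degenerate case. Accordingly the proof requires no case analysis beyond pointing to where positivity is used.
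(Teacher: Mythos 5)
Your proof is correct and follows exactly the paper's argument: assume the MWIS is not maximal, adjoin a vertex with no neighbor in the set, and use strict positivity of the weights to contradict weight-maximality. Your added remark that positivity is the load-bearing hypothesis (and that the claim fails with zero weights) is a nice observation but does not change the substance.
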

\begin{proof}
    If a MWIS $M$ were not maximal, there would exist a vertex $u \in V\setminus M$ with no neighbors in $M$. Then $M \cup \{u\}$ is independent and has strictly larger weight than $M$, contradicting maximality. 
    Hence, every MWIS is maximal.
\end{proof}

Given a target logical constraint $\mathcal{L}\subseteq\{0,1\}^k$ and a candidate 
graph space $\mathcal{G}$ (e.g., unit-disk graphs on a triangular lattice), we assign 
a designated set of vertices $P=\{p_1,\ldots,p_k\}\subseteq V$ in each candidate 
graph $G \in \mathcal{G}$. For each candidate graph, we enumerate its 
\emph{maximal} independent sets $\mathcal{M}$ (see \Cref{thm:mwis}) and extract the subset 
$\mathcal{M}_{\min}\subseteq \mathcal{M}$ that is sufficient to realize the logically equivalent condition,
i.e., the pin-projected MWIS configurations match the target constraint.
Weights $\Delta\in\mathbb{Z}_{\ge 0}^{|V|}$ are then assigned to vertices 
by solving the following integer linear program:

\begin{equation}\label{eq:lp}
    \begin{split}
        &\min_{\Delta \in \mathbb{Z}^{|V|}_{\geq 0}} \sum_i \Delta_i\\
        &\sum_i n_i' \Delta_i < \sum_i n_i \Delta_i, \quad \forall\, \mathbf{n} \in \mathcal{M}_{\text{min}},\, \mathbf{n}' \in \mathcal{M} \setminus \mathcal{M}_{\text{min}} \\
        &\sum_i n_i \Delta_i = \sum_i n_i' \Delta_i, \quad \forall\, \mathbf{n},\, \mathbf{n}' \in \mathcal{M}_{\text{min}}.
\end{split}
\end{equation} 
These two constraints ensure that all configurations in $\mathcal{M}_{\min}$ acquire identical energies, while every non-target configuration in $\mathcal{M} \setminus \mathcal{M}_{\min}$ is lifted above the ground-state energy. Optionally, one can minimize $\sum_v \Delta_v$ to keep the vertex weights small. Solving this integer program using a standard solver (e.g., Gurobi~\cite{gurobi}) yields feasible weight assignments, and iterating over the graph space $\mathcal{G}$ enables systematic discovery of gadgets that encode the target logical constraints.

For gadgets on unit-disk graphs such as TLSGs, the candidate graph space $\mathcal{G}$ is generated by applying a Boolean mask to the triangular lattice. 
The \emph{pin vertices} representing logical variables are constrained to lie on the boundary, ensuring that ancilla vertices do not form unwanted connections when gadgets are composed.

The gadgets used in \Cref{thm:main} obtained through this search framework are shown in Fig.~\ref{fig:show_gadget}(a-c). In particular, the crossing gadget on the triangular lattice requires 12 vertices with weights in the range 1 to 4. A brute-force construction as in Ref.~\cite{Nguyen2023} would necessitate enumerating at least $4^{12}  \approx 1.6 \times 10^7$ weight allocations, making the gadget-design computationally infeasible. By contrast, our integer-programming approach finds valid solutions within only a few seconds.

\begin{figure}[t]
    \centering
    \includegraphics[width=\linewidth]{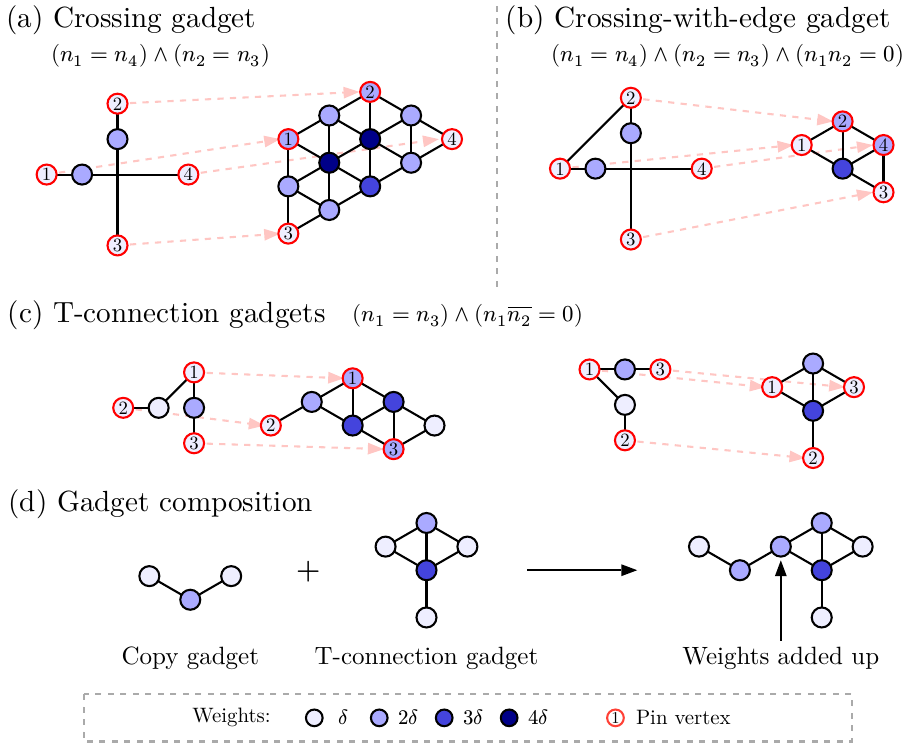}
    \caption{(a-c) Three essential gadgets for TLSG encoding. In each subfigure, the left column is the source graph, and the right column is the mapped graph on a triangular lattice. The red-framed vertices on the boundary are pin vertices, and only pin vertices can connect to external vertices. The full list of gadgets is provided in the Supplemental Material. (d) The composition of a copy gadget and a T-connection gadget. This requires summing the weights of the connected pin vertices at the junction.}
    \label{fig:show_gadget}
\end{figure}

\paragraph{\label{sec:result}Numerical simulation results}—
To benchmark the TLSG encoding, we simulate the quantum optimization process for the encoded $K_{2,3}$ instance shown in Fig.~\ref{fig:main}(d), comparing it against the KSG encoding obtained via \textit{UnitDiskMapping.jl}~\cite{UnitDiskMapping2025} shown in Fig.~\ref{fig:K23_pulse}(a). We quantify encoding performance using the \emph{violation rate}—the per-bond probability that independence constraints are violated in the final measurement outcomes (details in Supplemental Material).

\begin{figure}[htbp]
\centering
\includegraphics[width=
\linewidth]{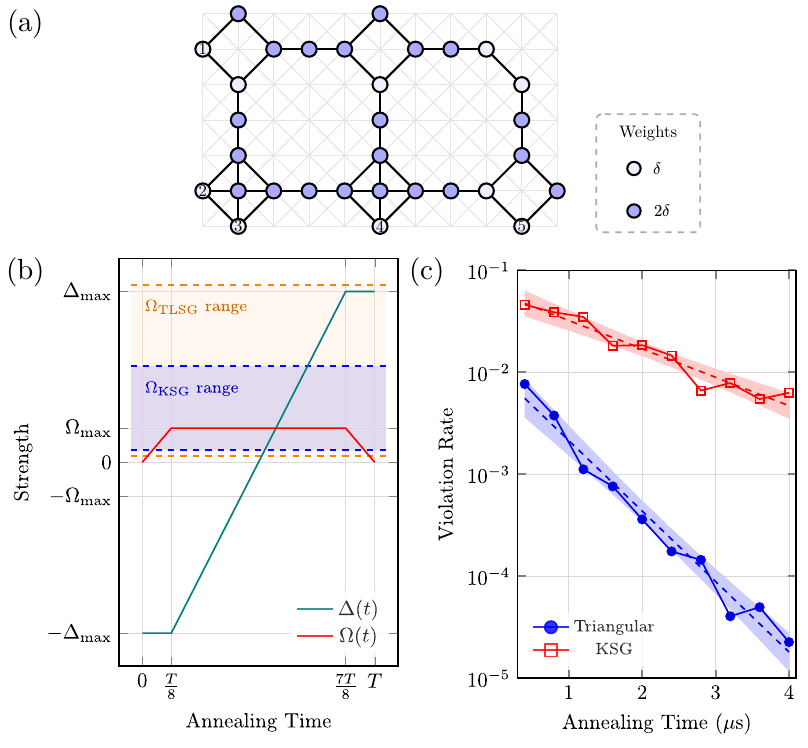}
\caption{Quantum annealing simulations comparing TLSG and KSG encodings for the $K_{2,3}$ problem.
(a) The KSG encoding of $K_{2,3}$ for comparison. The TLSG encoding is shown in Fig.~\ref{fig:main}(d).
(b) Piecewise-linear annealing pulses for Rabi frequency $\Omega(t)$ and detuning $\Delta(t)$. 
The shaded regions indicate the operational range $\frac{C_6}{R_{\rm min}^6} <\Omega <\frac{C_6}{r_{\rm max}^6}$ with $C_6=2\pi\times 862690 \,\text{MHz} \, \mu\text{m}^6$. The lattice unit is chosen so that $\Omega_{\rm max}$ equals the geometric mean of the interaction bounds.
(c) Violation rate of independence constraints versus total annealing time for both encodings. 
Dashed lines show exponential fits, and shaded regions denote $95\%$ confidence intervals. 
The TLSG encoding reduces the violation rate by nearly two orders of magnitude compared to the KSG encoding across most annealing regimes.}
\label{fig:K23_pulse}
\end{figure}

In quantum annealing, the system evolves adiabatically under time-varying Rabi frequency $\Omega(t)$ and detuning $\Delta(t)$. Following the experimental protocol of Ref.~\cite{Ebadi2022}, we employ unoptimized piecewise-linear ramps for both parameters. 
To ensure fair comparison, we adopt identical parameter strategies for TLSG and KSG instances: given preset maximum values $\Omega_{\rm max}$ and $\Delta_{\rm max}$, the lattice unit $a$ is chosen such that $\Omega_{\rm max}$ equals the geometric mean of interaction strengths at distances $R_{\rm min}$ and $r_{\rm max}$, i.e., $\Omega_{\rm max} = \sqrt{\frac{C_6}{R_{\rm min}^6} \cdot \frac{C_6}{r_{\rm max}^6}} = \frac{C_6}{\sqrt{R_{\rm min}^3 r_{\rm max}^3}}$.
We perform tensor network simulations across various annealing times (from $0.4\,\mu\text{s}$ to $4.0\,\mu\text{s}$) to analyze violation rate scaling.
For detailed parameter setup and simulation methods, see Supplemental Material.

The key advantage of TLSG encoding lies in its enlarged operational energy window. As illustrated in Fig.~\ref{fig:K23_pulse}(b), the annealing pulses must operate within constraints imposed by strong interactions at $r_{\rm max}$ (maintaining blockade) and weak interactions at $R_{\rm min}$ (avoiding spurious blockade of disconnected atoms).
The superior quality factor of TLSG encodings translates directly to enhanced performance. Figure~\ref{fig:K23_pulse}(c) demonstrates that TLSG encoding reduces violation rates by nearly two orders of magnitude compared to KSG encoding.
Furthermore, the violation rate exhibits steeper exponential decay with increasing annealing time, demonstrating more effective constraint enforcement by the TLSG encoding.
For annealing times exceeding $3\,\mu\text{s}$, the TLSG violation rate drops below $10^{-4}$, eliminating post-processing even for  state-of-the-art arrays with $6100$ atoms~\cite{Manetsch2024}, as $(1-10^{-4})^{6100} \sim 0.543$, allowing us to simply discard the illegal configurations without post-processing.
 
\paragraph{\label{sec:conclusion}Conclusions and outlook}
— Our work establishes triangular-lattice subgraphs (TLSGs) as superior to King's subgraphs for Rydberg quantum optimization. The enhanced quality factor ($Q = \sqrt{3}$ vs $\sqrt{2}$) reduces constraint violations by nearly two orders of magnitude, potentially eliminating the need for post-processing.
We contribute two key innovations: (i) the first systematic MWIS encoding scheme for TLSGs with $O(n^2)$ overhead using crossing-lattice constructions and specialized gadgets, and (ii) an automated integer-programming framework for discovering valid gadgets—previously computationally infeasible. Both are implemented in open-source Julia packages available online~\cite{UnitDiskMapping2025,GadgetSearch2025}.

This framework has the potential to be applied to other NP-hard problems (graph coloring, dominating sets, etc.) and alternative platforms (trapped ions~\cite{doi:10.1073/pnas.2006373117}, superconducting qubits~\cite{harrigan2021quantum}), broadening quantum optimization beyond Rydberg systems.

\paragraph{Acknowledgment}
— This work was partially supported by the National Key R\&D Program of China (Grant No.~2024YFB4504004), the National Natural Science Foundation of China under grant nos.~12404568, and the Guangzhou Municipal Science and Technology Project (No. 2024A03J0607). 
\bibliography{refs}

\clearpage
\onecolumngrid
\renewcommand{\theequation}{S\arabic{equation}}
\setcounter{equation}{0} 

\renewcommand{\thefigure}{S\arabic{figure}}
\setcounter{figure}{0} 
\renewcommand{\thetable}{S\arabic{table}}
\setcounter{table}{0}

\section*{Supplementary Material: Encoding computationally hard problems in triangular Rydberg atom arrays}

In this supplemental material, we provide detailed technical information and extended discussions that support the main results presented in the paper. 

Section~\ref{appsec:deform} introduces the triangular lattice geometry and explains the coordinate transformation convention used to map between the square-grid representation and the physical triangular lattice. Section~\ref{appsec:composition} presents the gadget composition principle, which forms the theoretical foundation for combining individual logical constraints into complex systems through vertex merging operations. 
Section~\ref{appsec:mapping} describes our systematic methodology for encoding arbitrary graphs onto triangular-lattice subgraphs (TLSGs)—unit-disk graphs laid out on a triangular lattice. Section~\ref{appsec:gadget} details the algorithmic framework for searching and constructing gadgets that encode specific logical constraints. Finally, Section~\ref{appsec:simulation} describes the numerical simulation setup used to benchmark the performance of our TLSG encoding against the established King's subgraph (KSG) approach through quantum annealing simulations.

\section{\label{appsec:deform}Triangular lattice geometry and layout convention}

In a TLSG with spacing $a$ (Fig.~\ref{fig:reshape}(b)), each vertex has six nearest neighbors at distance $a$, 
and the shortest non-adjacent distance is $\sqrt{3}\,a$, hence
\[
Q_{\mathrm{TLSG}}=\frac{R_{\min}}{r_{\max}}=\frac{\sqrt{3}\,a}{a}=\sqrt{3}.
\]

In a King's graph (KSG) each vertex has eight neighbors (four axial at distance $a$ and four diagonal at distance $\sqrt{2}\,a$; Fig.~\ref{fig:reshape}(a)). 
Here the largest edge length is $r_{\max}=\sqrt{2}\,a$, while the shortest non-edge distance is $R_{\min}=2\,a$, giving
\[
Q_{\mathrm{KSG}}=\frac{R_{\min}}{r_{\max}}=\frac{2\,a}{\sqrt{2}\,a}=\sqrt{2}.
\]

Since Rydberg interactions scale as $V(r)\propto 1/r^6$, the relevant interaction-scale separation is
\[
\frac{V(r_{\max})}{V(R_{\min})}=Q^6,
\]
so $Q_{\mathrm{TLSG}}^6=27$ versus $Q_{\mathrm{KSG}}^6=8$. In other words, the triangular lattice provides a substantially larger energetic gap between connected and non-connected atoms, enhancing the robustness of maximum weighted independent set (MWIS) encodings against spurious long-tail interactions.

In our code implementation, the triangular lattice is represented on a square grid to simplify coordinate assignment. However, this representation distorts the original geometry and requires coordinate transformation.

To recover the physical triangular lattice from the square-grid representation, 
we perform a coordinate transformation that accounts for both spacing anisotropy 
and parity-dependent offsets.  
In this convention, vertices in odd-numbered columns are shifted vertically by half a lattice spacing ($0.5\,a$), 
while the horizontal spacing between columns is scaled by a factor of $\sqrt{3}/2$.  
Formally, for an internal coordinate $(x, y)$, the physical coordinates $(X, Y)$ are
\begin{equation}
(X, Y) =
\left(
\frac{\sqrt{3}}{2} \, a \, x, \quad
a \left[ y + \frac{1}{2} (x \bmod 2) \right]
\right).
\end{equation}
\begin{figure}[!htbp]
    \centering
    \includegraphics[width=0.5\textwidth]{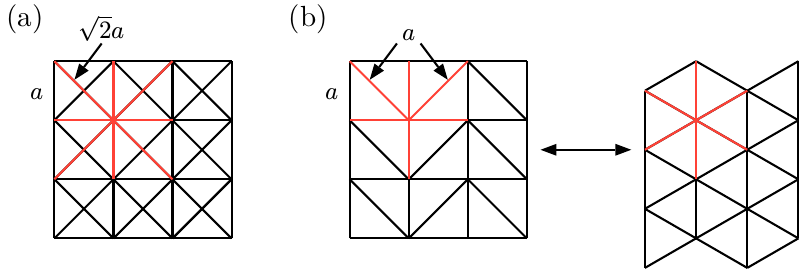}
    \caption{(a) King's subgraph (KSG) on a square lattice. (b) Triangular lattice subgraph (TLSG) and the transformation process between the original triangular lattice and its reshaped square-grid representation used in the code implementation.}
    \label{fig:reshape}
\end{figure}

As a consequence of this layout transformation, horizontal wires (aligned along rows) become zigzag-shaped in the true triangular lattice, while vertical wires (aligned along columns) remain straight. This distinction affects gadget design since the geometry and connectivity patterns differ between orientations.

\section{\label{appsec:composition}Gadget composition principle}
This section reviews and summarizes the gadget composition principle introduced by Nguyen et al.~\cite{Nguyen2023}. In the context of the MWIS problem, a NOT gadget can be represented by a simple graph consisting of just two vertices with identical weights and a single edge connecting them (Fig.~\ref{appfig:not_gadget}(a)). This construction ensures that only one of the two vertices can be included in any MWIS, effectively encoding the logical negation: if one vertex (representing a Boolean variable) is selected, the other (representing its negation) must be excluded, and vice versa. 

\begin{figure}[!htbp]
    \centering
    \includegraphics[width=0.45\textwidth]{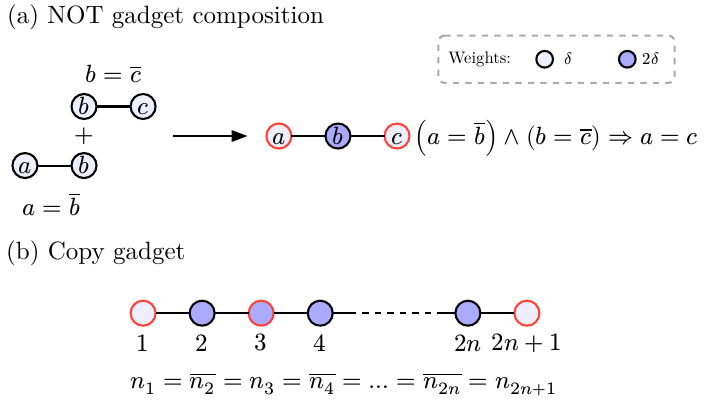}
    \caption{Illustration of how logical conjunction translates to gadget composition in the MWIS framework. (a) Two NOT gadgets encode the constraints $a = \bar{b}$ and $b = \bar{c}$, and are combined by merging the vertices representing the shared variable $b$, ensuring both constraints are enforced simultaneously. (b) shows an extended version where an odd-length vertex wire is constructed from a sequence of NOT gadgets. Red-framed vertices are equivalent in the MWIS sense, allowing local information to propagate over long distances.}
    \label{appfig:not_gadget}
\end{figure}

To illustrate how logical conjunction corresponds to graph composition in the MWIS framework, consider encoding the two constraints: $a = \bar{b}$ and $b = \bar{c}$, see \Cref{appfig:not_gadget}(a). Each of these constraints can be represented by a simple NOT gadget. 
To enforce both constraints simultaneously, the two gadgets must be composed. Since both constraints involve the same logical variable $b$, we must identify the two vertices representing $b$, i.e., merge $v_b$ and $v_b'$ into a single vertex. This merged vertex now participates in both constraints.

From the energy perspective, each gadget originally assigned a weight $\delta$ to its vertex $v_b$ (and likewise to $v_a$ and $v_c$). After merging, the shared vertex $v_b$ contributes to two energy terms—one from each NOT constraint. Since the MWIS energy function is additive over vertex weights, the new weight of $v_b$ must be updated to $2\delta$, reflecting the combined energetic influence of both constraints on that variable.

In summary, merging two gadgets on a shared logical variable enforces the conjunction of their respective constraints. 
The overlapping vertex inherits the sum of the weights from all participating gadgets, 
so that the total MWIS Hamiltonian remains a linear sum of individual gadget energies. 
Merging vertices does not affect the independence constraints, provided that no new edges are introduced, 
ensuring that each logical constraint contributes correctly to the total energy. 
As a result, the ground state of the composite system simultaneously satisfies all encoded constraints.
We summarize the above procedure in the following according to Ref.~\cite{Nguyen2023}.

\begin{proposition}[Gadget composition principle]
    \label{prop:composition}
    Let $G_1=(V_1,E_1,\Delta_1)$ and $G_2=(V_2,E_2,\Delta_2)$ be gadgets that are logically equivalent on their pin sets $P\subseteq V_1$ and $Q\subseteq V_2$ to logical constraints $\mathcal{L}_1$ and $\mathcal{L}_2$, respectively. Let $p\in P$ and $q\in Q$ represent the same logical variable. Form the composite graph $G$ by identifying $p$ and $q$ into a single vertex $v$ and inheriting edges from $E_1\cup E_2$ (updated to reflect the merge). Assume the MWIS objective is additive in vertex weights and that the merge introduces no edges beyond $E_1\cup E_2$. Define
    \[
    \Delta(v)=\Delta_1(p)+\Delta_2(q),
    \]
    and keep all other vertex weights unchanged. Let the composite pin set be $(P\setminus\{p\})\cup(Q\setminus\{q\})\cup\{v\}$.
    Then the MWIS solutions of $G$ are logically equivalent on the composite pin set to the conjunction $\mathcal{L}_1\wedge \mathcal{L}_2$.
\end{proposition}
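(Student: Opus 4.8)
The plan is to exploit the near-disjoint structure of the composite graph: apart from the single identified vertex $v$, the vertex sets $V_1$ and $V_2$ share nothing, and by hypothesis the merge introduces no edges beyond $E_1\cup E_2$. Consequently, any independent set $M$ of $G$ splits canonically as $M_1=M\cap V_1$ and $M_2=M\cap V_2$, where $M_1$ is independent in $G_1$, $M_2$ is independent in $G_2$, and the two pieces are forced to \emph{agree} on the membership of $v$ (either $v$ lies in both $M_1$ and $M_2$, or in neither). Conversely, any such agreeing pair glues back to an independent set of $G$. First I would record the crucial weight identity produced by the rule $\Delta(v)=\Delta_1(p)+\Delta_2(q)$: writing $w,w_1,w_2$ for the weight functions of $G,G_1,G_2$, one checks that $w(M)=w_1(M_1)+w_2(M_2)$ exactly, because the doubled contribution of the merged vertex is precisely what its new weight accounts for.

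With additivity in hand, the optimization decouples up to the agreement constraint on $v$. Let $W_1$ and $W_2$ denote the MWIS weights of $G_1$ and $G_2$. Since $w_1(M_1)\le W_1$ and $w_2(M_2)\le W_2$ for every feasible pair, the composite optimum obeys $w(M)\le W_1+W_2$, giving an immediate upper bound $W^*\le W_1+W_2$. For achievability I would invoke logical equivalence: assuming the conjunction is satisfiable on the shared variable, i.e.\ there exist $x\in\mathcal{L}_1$ and $y\in\mathcal{L}_2$ with $x_v=y_v$, I pick MWIS $M_1$ of $G_1$ and $M_2$ of $G_2$ realizing these pin configurations (they exist by logical equivalence and agree on $v$ by construction); their gluing attains weight $W_1+W_2$, so the bound is tight. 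The decisive consequence is a clean characterization: because $w(M)=W_1+W_2$ together with $w_1(M_1)\le W_1$ and $w_2(M_2)\le W_2$ forces $w_1(M_1)=W_1$ and $w_2(M_2)=W_2$, the MWIS of $G$ are \emph{exactly} the gluings of an MWIS of $G_1$ with an MWIS of $G_2$ that agree on $v$.

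It then remains to read off the pin projection. The composite pin set $(P\setminus\{p\})\cup(Q\setminus\{q\})\cup\{v\}$ restricts, on one side, to a copy of $P$ (with $v$ playing the role of $p$) and, on the other, to a copy of $Q$ (with $v$ playing the role of $q$). Projecting the characterized MWIS therefore yields precisely the set of configurations whose $P$-part lies in $\mathcal{L}_1$, whose $Q$-part lies in $\mathcal{L}_2$, and which share a common value on $v$, which is by definition $\mathcal{L}_1\wedge\mathcal{L}_2$. Non-degeneracy is inherited: a composite pin configuration determines its $G_1$-restriction and its $G_2$-restriction, each of which, being in $\mathcal{L}_1$ respectively $\mathcal{L}_2$, selects a \emph{unique} MWIS by the non-degeneracy of the individual gadgets; their gluing is then the unique composite MWIS with that projection.

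The main obstacle is not conceptual but bookkeeping: one must verify the weight identity $w(M)=w_1(M_1)+w_2(M_2)$ with care, since the merged vertex is the only place where the naive sum $w_1+w_2$ would double-count, and the weight rule is engineered exactly to absorb this. A secondary subtlety worth flagging is the satisfiability caveat: if $\mathcal{L}_1$ and $\mathcal{L}_2$ force incompatible values on the shared variable, so that $\mathcal{L}_1\wedge\mathcal{L}_2$ is empty on $v$, then $W^*$ drops strictly below $W_1+W_2$ and the characterization degrades. The statement should therefore be read under the standing assumption, always met in our encoding constructions, that the composed constraints admit a common value on the identified variable.
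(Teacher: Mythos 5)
Your proof is correct, and it follows the same underlying idea the paper uses to justify this proposition --- additivity of the MWIS objective over vertex weights, with the rule $\Delta(v)=\Delta_1(p)+\Delta_2(q)$ absorbing the would-be double count at the merged vertex, and the no-new-edges hypothesis guaranteeing that independence decouples across the two gadgets. The difference is one of rigor rather than route: the paper only argues this informally via the two-NOT-gadget example and then states the proposition as a summary (deferring to Nguyen et al.), whereas you supply the actual argument --- the canonical splitting $M\mapsto(M\cap V_1,\,M\cap V_2)$, the exact weight identity $w(M)=w_1(M_1)+w_2(M_2)$, the bound $W^*\le W_1+W_2$ with achievability, and the resulting characterization of composite MWIS as compatible gluings --- together with the inheritance of non-degeneracy. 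Your closing caveat is a genuine catch: if $\mathcal{L}_1$ and $\mathcal{L}_2$ admit no common value on the shared variable, then $\mathcal{L}_1\wedge\mathcal{L}_2=\varnothing$ while $G$ still has an MWIS, so the stated conclusion fails; the proposition implicitly assumes satisfiability of the conjunction, which the paper never makes explicit but which does hold in all of its encoding constructions.
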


The \emph{copy gadget}, considered in this work, is constructed from an even number of NOT gadgets according to \Cref{prop:composition}, 
which naturally results in an odd-length vertex wire of $2n+1$ atoms 
with boundary atoms assigned weight $\delta$ and interior atoms weight $2\delta$ (see \Cref{appfig:not_gadget}(b))~\footnote{All the gadgets are named according to their function, even though their specific implementations may differ. 
For example, in Ref.~\cite{Nguyen2023}, the crossing gadget enforces $(n_1 = \overline{n_4}) \land (n_2 = \overline{n_3})$, 
and the copy gadget contains an even number of vertices. 
In the present work, these gadgets may be implemented slightly differently—typically differing by a single NOT gadget per logical variable—but they perform analogous roles within the crossing lattice.}. 
Under the strong Rydberg blockade, ideally no two adjacent atoms are simultaneously excited. 
In the $\Omega/\Delta \rightarrow 0$ limit, the lowest-energy configurations of the odd-length wire form a two-fold degenerate $\mathbb{Z}_2$-ordered pair:
$|0\rangle_\ell = |0101\ldots010\rangle, \quad |1\rangle_\ell = |1010\ldots101\rangle$ \cite{bernien2017probing}.
These two logical states encode a binary variable indicating whether the corresponding graph vertex is excluded (0) or included (1) in the MWIS.

\begin{proposition}
    All odd-indexed vertices in a copy gadget are logically equivalent to the corresponding vertex in the original graph.
\end{proposition}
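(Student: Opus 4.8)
The plan is to prove that the logical states of the copy gadget wire are exactly the two $\mathbb{Z}_2$-ordered configurations, and that every odd-indexed vertex faithfully reports the same logical bit. I would first characterize the MWIS of the bare odd-length wire of $2n+1$ vertices with the stated weights (endpoints $\delta$, interior $2\delta$). Since the wire is a path graph, its independent sets are exactly the subsets with no two consecutive vertices selected. I would compute the total weight of an independent set and argue that the weight is maximized precisely by the two alternating patterns $|0\rangle_\ell = 010\cdots 010$ and $|1\rangle_\ell = 101\cdots 101$. Concretely, the all-odd-position set $\{1,3,\dots,2n+1\}$ (in $1$-indexed terms, the endpoints and every other interior vertex) and the all-even-position set $\{2,4,\dots,2n\}$ are the two maximal alternating independent sets; by \Cref{thm:mwis} any MWIS is maximal, so it suffices to compare weights among maximal independent sets and show these two dominate.

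The key computation is a weight-counting argument. For the even-position set $\{2,4,\dots,2n\}$, all $n$ selected vertices are interior, contributing $2\delta$ each, for total weight $2n\delta$. For the odd-position set, the two endpoints contribute $\delta$ each and the $n-1$ interior selected vertices contribute $2\delta$ each, also giving $2\delta + 2(n-1)\delta = 2n\delta$. Thus both alternating configurations have equal weight $2n\delta$, establishing the two-fold degeneracy. I would then show any other maximal independent set has strictly smaller weight: any deviation from perfect alternation forces a ``gap'' of two consecutive unselected vertices somewhere, and a short exchange/counting argument shows that each such gap costs more weight than it can recover, given that interior vertices are worth $2\delta$. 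This pins $\mathcal{M}_{\mathrm{MWIS}}$ to exactly the two $\mathbb{Z}_2$ states.

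With the two MWIS identified, the final step is immediate: in the configuration $|1\rangle_\ell$ every odd-indexed vertex is occupied and every even-indexed vertex is empty, while in $|0\rangle_\ell$ the reverse holds. Hence, reading off the occupation of \emph{any} odd-indexed vertex returns $1$ in state $|1\rangle_\ell$ and $0$ in state $|0\rangle_\ell$, identically across all odd positions. I would frame this as saying the pin-projection onto any odd-indexed vertex yields the same Boolean value, so each odd-indexed vertex is logically equivalent (in the sense of the Logical equivalence definition) to the single binary variable carried by the wire, which is precisely the original-graph vertex it copies.

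The main obstacle I anticipate is the strict-inequality step: rigorously showing that \emph{no} non-alternating maximal independent set ties the weight $2n\delta$. The subtlety is that interior vertices carry double weight, so one must verify that trading an interior vertex for two endpoint-adjacent configurations never breaks even; a clean way to handle this is a transfer-matrix or direct induction on $n$, or equivalently a local-defect counting argument tracking each maximal-gap position and charging its weight deficit. I would also need a brief remark that merging at odd-indexed pins (as required by \Cref{prop:composition}) introduces no new edges and only adds weights, so the logical-equivalence property of each odd-indexed vertex is preserved under composition into the crossing lattice.
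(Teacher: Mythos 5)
Your proof is correct, but it takes a genuinely different route from the paper's. The paper does not analyze the weighted path directly: it \emph{constructs} the copy gadget as a chain of $2n$ NOT gadgets (each a single edge with two weight-$\delta$ pins enforcing $x_i=\overline{x_{i+1}}$) and invokes the gadget composition principle (\Cref{prop:composition}) iteratively, so the composite wire's projected MWIS configurations are exactly the satisfying assignments of $\bigwedge_i\,(x_i=\overline{x_{i+1}})$, i.e.\ the two alternating $\mathbb{Z}_2$ states; equality of all odd-indexed bits then follows because each is obtained from the first by an even number of negations. Your approach instead verifies the same conclusion from first principles by weight-counting on the path. The compositional route is shorter and is the organizing principle of the whole encoding; your direct route is self-contained and independently confirms the non-degeneracy (exactly one MWIS per logical value) demanded by the gadget definition.

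On the one step you flag as delicate---ruling out ties from non-alternating maximal independent sets---there is a cleaner closing argument than a transfer matrix or defect-charging: write the weight of any independent set $S$ of the wire as $2\delta\,|S|-\delta\,e_S$, where $e_S\in\{0,1,2\}$ counts how many of the two endpoint vertices lie in $S$. If $|S|\le n-1$ the weight is at most $2(n-1)\delta<2n\delta$. The unique independent set of size $n+1$ in a path on $2n+1$ vertices is the all-odd set (easy induction). A size-$n$ independent set attains $2n\delta$ only if $e_S=0$, which confines $S$ to the interior path on $2n-1$ vertices, whose unique maximum independent set is the all-even set. This pins $\mathcal{M}_{\mathrm{MWIS}}$ to exactly the two alternating configurations with no further casework, and your final pin-projection step then goes through as written.
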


\section{\label{appsec:mapping}Graph encoding methodology: \\ From arbitrary graphs to UDGs on triangular-lattice subgraphs (TLSGs)}

Given an arbitrary simple graph $G = (V, E)$ with vertex set $V$ and edge set $E$, our objective is to construct an encoding into a weighted unit-disk graph on a TLSG (see Fig.~\ref{fig:reshape}) such that the maximum (weighted) independent set problem on the original graph can be reduced to the MWIS problem on the encoded graph.

To address this challenge, we introduce the \emph{crossing lattice} framework as in the KSG encoding~\cite{Nguyen2023}. In this approach, each logical vertex is represented as a vertex wire (copy gadget), arranged so that every pair of logical vertices intersects at exactly one crossing. This design propagates local adjacency relations along dedicated paths, ensuring that only pairwise interactions need to be physically realized at each intersection. Each edge in the original graph is represented by a crossing-with-edge gadget at intersections, while non-adjacent vertex pairs are either handled by crossing gadgets or remain unconnected.

We then present a systematic approach for encoding arbitrary graphs into TLSGs through the construction of crossing lattices. As described in \Cref{appsec:deform}, indices are assigned on a square-grid representation for convenience, whereas the physical coordinates and distances are consistently interpreted on the triangular lattice via the mapping convention.

\subsection{\label{appsubsec:copy}Step 1: Construct Crossing Lattice}
To construct the crossing lattice, we use L-shaped \emph{copy lines} as the geometric scaffold. 
Each copy line specifies the horizontal and vertical spans of an L-shaped wire assigned to a logical vertex. 
Arranged so that any two vertices intersect at most once, these wires faithfully reproduce the adjacency of the original graph. 
Finally, each copy line is populated with an odd-length copy gadget, turning the abstract scaffold into a physical vertex wire.

\begin{figure}[b]
    \centering
    \begin{tabular}{|c|c|c|c|c|}
        \hline
           Vertex  & Vertical slot & Vertical range & Horizontal slot & Horizontal range \\ \hline
        1 & / & / & 1 & [1, 4] \\ \hline
        2 & 2 & [1, 2] & 2 & [2, 4] \\ \hline
        3 & 3 & [1, 3] & 3 & [3, 4] \\ \hline
        4 & 4 & [1, 4] & / & / \\ \hline
    \end{tabular}
    \end{figure}

The complete graph $K_4$ provides a concrete illustration of this procedure.
As shown in Fig.~\ref{fig:crossinglattice}(a) and the table below, different colored copy lines determine the placement of vertex wires and their intersections, and physical sites are then populated along each line.
These slots partition the grid into cells, with size $s=6$ for the TLSG and $s=4$ for the KSG.
After vertices are placed at the subdivided grid points, each copy line forms an odd-length copy gadget.

At each crossing between the copy gadgets of vertices $u$ and $v$, we introduce an edge if $(u,v)\in E$.
In the Hamiltonian, such an edge corresponds to a hard constraint term $\infty \, n_u n_v$.

\begin{figure}[htbp]
    \centering
    \includegraphics[width=0.9\linewidth]{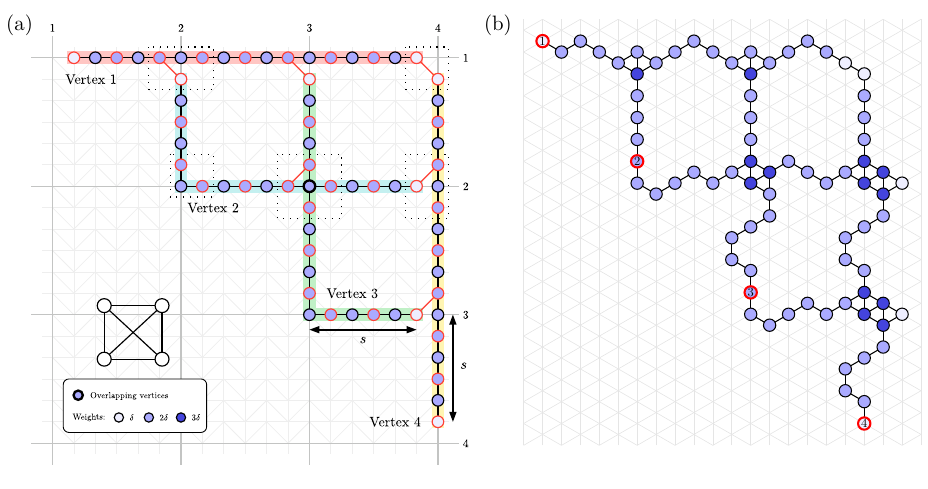}
    \caption{(a) Construction of the crossing lattice from copy lines for the complete graph $K_4$ (before gadget replacement). 
    Dark gray lines denote copy line slots, each colored wire represents a vertex of the original graph, red-framed sites mark the vertices logically equivalent to the originals, and red edges encode the original graph’s connectivity.
    Dashed boxes highlight some regions violating TLSG constraints. 
    (b) The crossing lattice shown in physical coordinates after gadget replacement and fine-tuning. 
    The number of auxiliary atoms can be further optimized, for example by manually removing dangling legs.}
    \label{fig:crossinglattice}
\end{figure}

\subsection{Step 2: Gadget replacement}

\paragraph{Identify and replace invalid substructures.}
The dashed boxes in Fig.~\ref{fig:crossinglattice}(a) highlight local substructures of the crossing lattice that violate UDG constraints in the TLSG after Step 1. These regions are extracted and replaced with logically equivalent gadgets, determined by the logical rules of the boundary pin vertices. When performing replacement, the weights of pin vertices connected to external structures must be updated according to Proposition~\ref{prop:composition}. Details of the replacement are shown in Fig.~\ref{fig:gadget_replacement}.

\paragraph{Apply geometric fine-tuning.}
Replacement gadgets often differ in size and pin placement from the originals, so additional adjustments are required (see Fig.~\ref{fig:gadget_replacement}). Fine-tuning consists of bending, extending, or shortening copy gadgets to maintain connectivity and ensure consistent pin-vertex alignment. Wires may need to detour sideways to preserve the parity of each wire. The resulting layout after replacement and fine-tuning is shown in Fig.~\ref{fig:crossinglattice}(b).

\paragraph{Adjust vertex weights with detuning shifts.}
If an original vertex $v_i$ carries a weight $w_i$, this can be renormalized and transferred to the encoding by adding a small detuning shift $\varepsilon_i$ at the corresponding logically equivalent vertex, with
$\varepsilon_i \propto w_i\, $, subject to the requirement that it is large enough to lift unwanted degeneracies but small enough not to disturb the $\delta$-scale constraints. Even for unweighted MIS instances, a small nonzero $\varepsilon$ is added to break unwanted degeneracies. For a single copy gadget, such a detuning acts as a symmetry-breaking perturbation: it lifts the degeneracy between the two MWIS states without affecting the $\mathbb{Z}_2$ ordering of the wire.

\paragraph{Optimize for compactness.}
Although the encoding is complete after gadget replacement and fine-tuning, further optimization can reduce resource overhead. Dangling vertex wires may be removed, and copy gadgets shortened, provided that parity constraints are respected. At present, such optimization needs to be performed manually.

\paragraph{Finalize by reshaping to the triangular lattice.}
The last step maps the square-grid representation onto the physical triangular lattice using the procedure in Section~\ref{appsec:deform}. This reshaping preserves all unit-disk edges and requires no further adjustments. For readout, the measurement outcome may be taken from any logically equivalent vertex along a vertex wire.

\begin{figure*}[htbp]
    \centering
    \includegraphics[width=\textwidth]{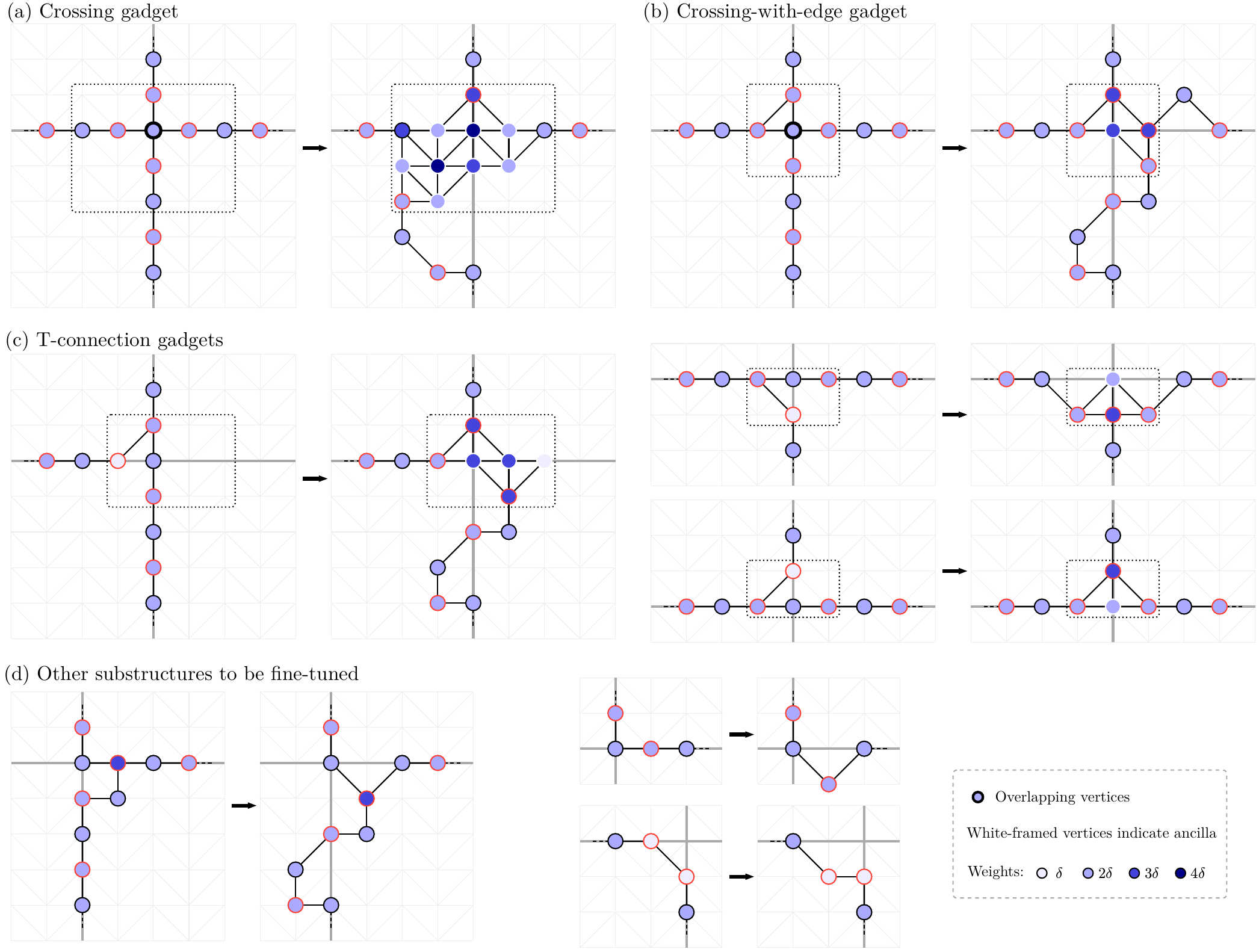}
    \caption{Illustration of the gadget replacement and fine-tuning procedure. The dashed boxes indicate regions where gadgets are directly replaced. Fine-tuning is applied after replacement to ensure proper integration with the existing structure, subject to two constraints: (1) the parity of the number of atoms on each vertex-wire is preserved, and (2) no additional connections are introduced beyond those that already exist. Red-framed vertices indicate the logical counterparts on the corresponding vertex wires, and white-framed vertices inside each gadget have no logical role and function solely as ancillas.}
    \label{fig:gadget_replacement}
\end{figure*}

\subsection{Overhead Analysis}
For a graph with $n$ vertices and $m$ edges, the TLSG encoding scheme produces an encoding containing $N_\text{TLSG}$ vertices, given approximately by
\begin{equation}
\begin{split}
N_\text{TLSG} &\lesssim 6n(n-1) + 4m_\text{right} 
+ 1\left( m - m_\text{top} - m_\text{right} \right) 
+ 4 \left( \frac{n(n-1)}{2} - m \right) \\
&\sim 8n^2 - 6n - 3m,
\end{split}
\end{equation}
where $6$ is the size of the cross-lattice cell $s$ (see \Cref{fig:crossinglattice}(a)), $4$ the ancilla vertex count of the rightmost T-connection gadget, $1$ that of the crossing-with-edge gadget, $4$ that of the crossing gadget, $m_\text{right}, m_\text{top} \leq n-1$ denote the numbers of edges intersecting the rightmost and topmost copy lines, respectively. 

This analysis indicates that our method requires at most $O(8n^2)$ atoms, whereas the encoding overhead on the KSG is at most $O(4n^2)$~\cite{Nguyen2023,schuetz2025quantum}. This estimate is rather coarse, as it does not account for potential optimizations such as vertex reordering~\cite{Nguyen2023} after encoding.

\begin{figure*}[htbp]
    \centering
    \includegraphics[width=0.95\linewidth]{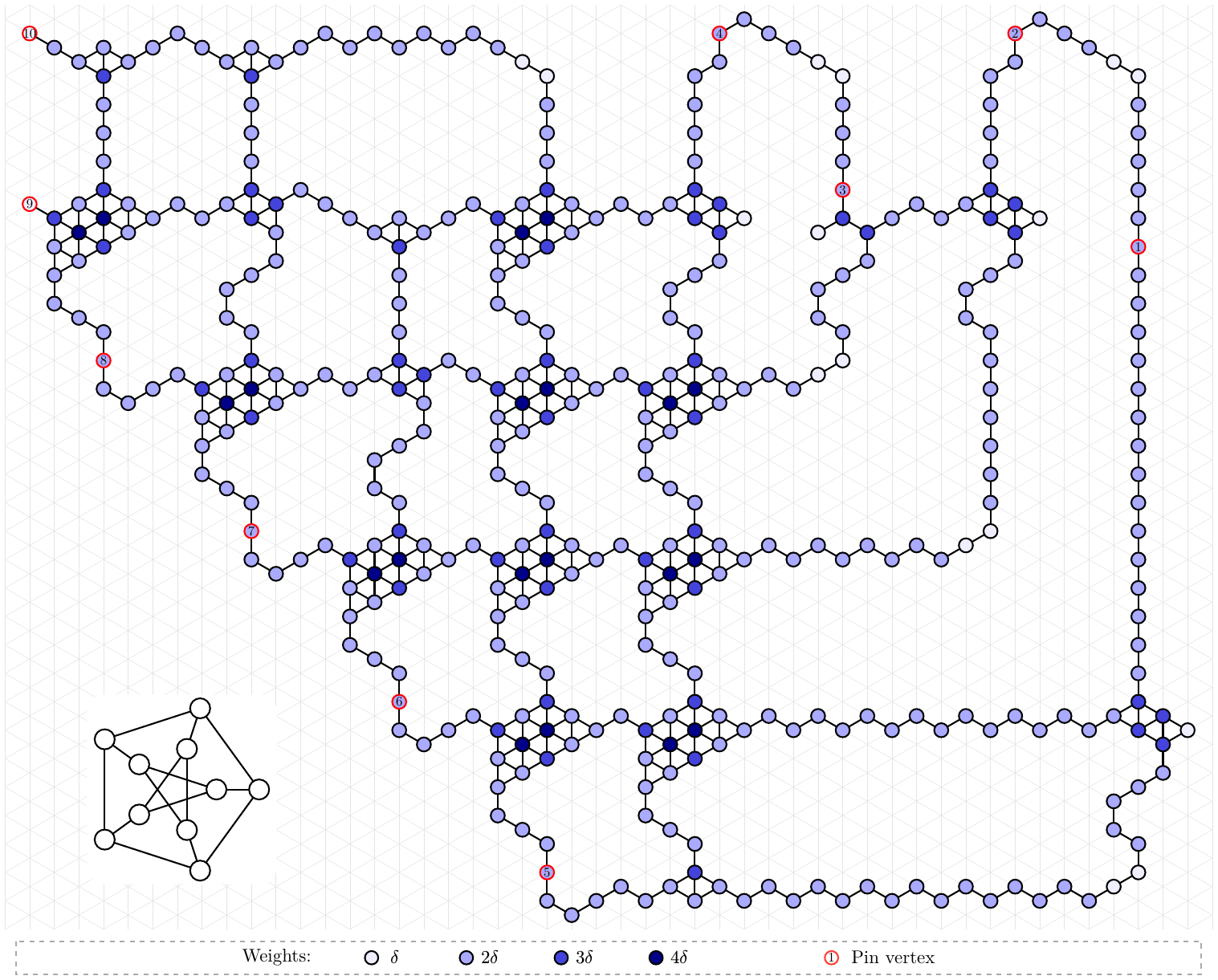}
    \caption{Encoding of the Petersen graph onto a TLSG. Vertex color indicates weight. 
    Vertices in the original graph correspond to the red-framed vertices. 
    Membership in the original graph's independent set is inferred by projective measurement of the corresponding red-framed vertex in the lattice.}
    \label{fig:show_petersen}
\end{figure*}

\subsection{Code Availability and Usage}
We build upon the existing Julia package \textit{UnitDiskMapping.jl}~\cite{UnitDiskMapping2025}, which reduces generic maximum (weighted) independent set, QUBO, or integer factorization problems to KSG formulations. In this work, we extend the package by introducing a new module that encodes generic MIS/MWIS instances directly onto TLSGs, enabling natural implementation on neutral-atom quantum computers. The updated codebase, along with usage examples and documentation, is openly available at~\cite{UnitDiskMapping2025}.

Here, we take the complete graph $K_4$ as an example to demonstrate how to use the code. The entire encoding workflow, though conceptually involving many steps, has been fully encapsulated, so that a single function call suffices to perform the whole process.

\begin{lstlisting}[style=JuliaREPL]
    julia> using UnitDiskMapping, Graphs

    julia> graph = complete_graph(4)
    {4, 6} undirected simple Int64 graph

    julia> triangular_weighted_res = map_graph(TriangularWeighted(), graph);
\end{lstlisting}

The result can then be visualized, producing plots such as those shown in Fig.~\ref{fig:crossinglattice}(b). A larger example using the Petersen graph is shown in Fig.~\ref{fig:show_petersen}.

\begin{lstlisting}[style=JuliaREPL]
    julia> using LuxorGraphPlot.Luxor, LuxorGraphPlot

    julia> show_grayscale(triangular_weighted_res.grid_graph)   # Visualize weighted graph with colors

    julia> show_pins(triangular_weighted_res) # Highlight pin vertices in the encoding graph
\end{lstlisting}

The code also provides APIs to verify the correctness of the encoding results; for detailed usage, please refer to the documentation.

\section{\label{appsec:gadget}Details of the search algorithm}
In this section, we describe how, given a logical rule $\mathcal{L}$, one can construct or search for a graph $G = (V,E)$ along with positive vertex weights $\{\Delta_v \mid v \in V\}$ such that the solution to the MWIS problem on this graph corresponds to the given logical rule. Such a graph is what we previously referred to as a gadget which maps a discrete set of states to the energy-minimizing solution of a combinatorial optimization problem.
\subsection{Preliminaries}
\begin{definition}[Maximal Independent Set]
A maximal independent set is an independent set that is not a proper subset of any other independent set.
\end{definition}

In other words, a maximal independent set $S$ satisfies the following condition: adding any vertex not in $S$ would violate its independence, i.e.,
\begin{equation}
    \forall v \in V \setminus S, \; \exists u \in S \text{ such that } (u,v) \in E.
\end{equation}

\begin{proposition}[MWIS is maximal]
    For any positive-weighted graph $G = (V, E, \Delta)$, a maximum weighted independent set is always a maximal independent set.
\end{proposition}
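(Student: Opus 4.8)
The plan is to argue by contradiction, and the single fact doing all the work is the strict positivity of the vertex weights $\Delta = \{\Delta_v\}_{v\in V}$. First I would fix a maximum weighted independent set $M$, i.e. an independent set maximizing the total weight $w(S) := \sum_{v\in S}\Delta_v$ over all independent sets $S$, and suppose toward a contradiction that $M$ is \emph{not} maximal as an independent set.

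Next I would unpack that assumption using the definition of a maximal independent set given just above the statement: if $M$ fails to be maximal, then there exists a vertex $u\in V\setminus M$ whose addition does not destroy independence, which is precisely the condition that $u$ has no neighbor in $M$ (equivalently, $u$ violates the characterization $\forall v\in V\setminus S,\ \exists u\in S:(u,v)\in E$). Consequently $M\cup\{u\}$ is again an independent set, and its weight is $w(M\cup\{u\}) = w(M) + \Delta_u$. Here is the one place the hypothesis is used: since all weights are strictly positive, $\Delta_u > 0$, so $w(M\cup\{u\}) > w(M)$. This exhibits an independent set of strictly larger weight than $M$, contradicting the choice of $M$ as a maximum weighted independent set. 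Hence no such $u$ exists and $M$ is maximal.

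There is no genuine obstacle in the argument—it is a two-line contradiction—so the only thing I would be careful to flag is exactly \emph{why} positivity is indispensable: if zero (or negative) weights were allowed, one could have a MWIS that omits a weight-zero vertex addable without penalty, and the inequality $w(M\cup\{u\}) > w(M)$ would weaken to $\ge$, breaking the contradiction. Since the gadget framework of this paper restricts to positive integer weights $\Delta\in\mathbb{Z}_{\ge 0}^{|V|}$ (indeed strictly positive for the pins and ancillas used in the constructions), the hypothesis is met and the maximality property holds, which is what justifies enumerating only \emph{maximal} independent sets in the gadget-search procedure.
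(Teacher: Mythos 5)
Your proof is correct and follows essentially the same contradiction argument as the paper: a non-maximal MWIS would admit a vertex with no neighbors in it, whose strictly positive weight would yield a heavier independent set. The extra remark about why strict positivity is indispensable is accurate but not a departure from the paper's reasoning.
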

    
\begin{proof}
Suppose there exists a maximum weighted independent set $S$ that is not maximal. Then there exists a vertex $u \in V \setminus S$ that is not adjacent to any vertex in $S$. In this case, $S \cup \{u\}$ forms a larger independent set with total weight $\Delta(S) + \Delta(u) > \Delta(S)$, which contradicts the assumption that $S$ is maximum weighted. Therefore, a maximum weighted independent set must be maximal.
\end{proof}

As we will see in the following sections, the concepts of maximal and MWIS form the foundation for constructing gadgets: by carefully selecting maximal independent sets and assigning vertex weights, we can encode logical rules into the MWIS problem on unit-disk graphs.



\subsection{Search algorithm}

To find gadgets that satisfy the desired properties within a constrained graph dataset $\mathcal{G}$ (e.g., TLSGs), we design a systematic search procedure, as outlined in \Cref{alg:search-graph-logic}.

\begin{figure}[t]
    \centering
        \begin{algorithm}[H]
            \caption{Graph Traversal and Logic Constraint Matching Search Process} 
            \label{alg:search-graph-logic}
            \SetKwInOut{Input}{Input}
            \SetKwInOut{Output}{Output}
            \Input{Graph set to search $\mathcal{G}$; logic rules $\mathcal{L}$ (e.g., truth table of a Boolean function)}
            \Output{Set $\mathcal{S}$ of graphs and corresponding weight configurations that satisfy $\mathcal{L}$}
            
            $\mathcal{S} \gets \varnothing$ \tcp*{Initialize the solution set as empty}
        
            \ForEach{$G \in \mathcal{G}$}{
        
                $\mathcal{M} \gets$ \texttt{MaximalIndependentSets}$(G)$ \tcp*{Compute maximal independent sets}
                $O \gets$ \texttt{GetOpenNodes}$(G)$ \tcp*{Get candidate nodes that can serve as pins}
                
                \ForEach{$\text{pins} \in$ \texttt{GeneratePinConfigs}$(O,\mathcal{M}, \mathcal{L})$} { 
                    $\mathcal{M}_\text{min} \gets$ \texttt{GetProperMIS}$(\mathcal{M},\text{pins}, \mathcal{L})$ \tcp*{Select MISs that satisfy the logic rule}
                    \If{$\mathcal{M}_\text{min}$ is empty}{
                        continue
                    }
        
                    $\mathrm{IP} \gets$ \texttt{FormulateIP}$(\mathcal{M}, \mathcal{M}_\text{min},\text{pins}, \mathcal{L})$ \tcp*{Formulate a mixed-integer program}
                    
                    \If{\texttt{Solve}$(\mathrm{IP})$ is feasible}{
                        $\Delta \gets$ \texttt{ExtractWeights}$(\mathrm{IP}.\text{solution})$ \tcp*{Extract vertex weights}
                        
                        $\mathcal{S} \gets \mathcal{S} \cup \{(G, \text{pins}, \Delta)\}$ \tcp*{Record the current solution}
                    }
                }
            }
        
            \Return{$\mathcal{S}$}
        \end{algorithm}
\end{figure}

\textit{Step 1: Solving maximal independent sets $\mathcal{M}$}
For each candidate graph $G = (V,E) \in \mathcal{G}$, all maximal \emph{cliques} in the complement graph $\overline{G}$ can be found using the Bron-Kerbosch algorithm~\cite{bron1973algorithm}, which corresponds to all maximal independent sets $\mathcal{M}$ on $G$. The set of maximal independent sets $\mathcal{M}$ can be represented as a $|\mathcal{M}| \times |V|$ binary matrix, where each row corresponds to a maximal independent set $\mathbf{n} \in \{0,1\}^{|V|}$ and each column corresponds to a vertex $v_i$ in the graph.

\textit{Step 2: Selecting target maximal independent sets $\mathcal{M}_\text{min} \subseteq \mathcal{M}$ based on $\mathcal{L}$}
In our encoding scheme, pin vertices—representing logical variables at gadget boundaries—are chosen from the open boundary vertices. 
We first identify the candidate set of pin vertices $O$. 
Given the logical rule $\mathcal{L}$ and the set of maximal independent sets $\mathcal{M}$, the algorithm then enumerates all possible ordered pin configurations within $O$, which determine how logical bits are mapped onto gadget vertices.

Let the target configuration set for $\mathcal{L} = \{\mathbf{s}_1, \mathbf{s}_2, \dots, \mathbf{s}_N\}$, where each $\mathbf{s}_i \in \{0,1\}^{k}$, $k \leq |V|$. For a fixed pin configuration, a maximal independent set is said to satisfy $\mathbf{s}_i$ if its configuration values on the pin vertices exactly match $\mathbf{s}_i$.

Accordingly, for each target configuration $\mathbf{s}_i$, one can define the candidate set of maximal independent sets $\mathcal{M}_{\mathbf{s}_i} \subseteq \mathcal{M}$. If $\mathcal{M}_{\mathbf{s}_1}, \dots, \mathcal{M}_{\mathbf{s}_N}$ are all non-empty, then $\mathcal{M}_\text{min}$ is non-empty, indicating that the vertex configuration \emph{possibly} encodes the logical rule $\mathcal{L}$. As the number of ancilla vertices increases, the number of maximal independent sets also grows, so the selection of $\mathcal{M}_\text{min}$ satisfying $\mathcal{L}$ is generally not unique. 

\textit{Step 3: Formulating the integer programming problem}
For any vertex configuration with a non-empty $\mathcal{M}_\text{min}$, the algorithm formulates the integer programming problem $\mathrm{IP}$ defined in main text:
\begin{equation}
    \begin{split}
        &\min_{\boldsymbol{\Delta} \in \mathbb{Z}^{|V|}_{\geq 0}} \sum_i \Delta_i\\
        &\sum_i n_i' \Delta_i < \sum_i n_i \Delta_i, \quad \forall\, \mathbf{n} \in \mathcal{M}_{\text{min}},\, \mathbf{n}' \in \mathcal{M} \setminus \mathcal{M}_{\text{min}} \\
        &\sum_i n_i \Delta_i = \sum_i n_i' \Delta_i, \quad \forall\, \mathbf{n},\, \mathbf{n}' \in \mathcal{M}_{\text{min}}.
    \end{split}
\end{equation}
As stated in the main text, vertex weights should be non-negative. In particular, if an optimal solution assigns a weight of zero to a vertex, it implies that the corresponding vertex in the resulting gadget can be removed.

In conclusion, the constraints in the IP are determined by the graph topology, the pin choice, and the logical rule. If the IP has a feasible solution—i.e., there exists a set of vertex weights $\Delta$ such that the graph under this configuration correctly encodes $\mathcal{L}$—then the graph, pin choice, and vertex weights are considered a valid solution and stored in the final solution set.

\begin{table}[htbp]
    \centering
    \begin{threeparttable}
    \caption{Examples of gadgets on the triangular lattice}
    \label{tab:logic-mwis}
    
    \begin{tabular}{c c c c @{\hskip 1cm} c c c c}
    \toprule
    \makecell{Logic \\ gate} & Gadget & \makecell{Pin \\ configuration} & \makecell{MWIS \\ energy} &
    \makecell{Logic \\ gate} & Gadget & \makecell{Pin \\ configuration} & \makecell{MWIS \\ energy} \\
    \midrule
    AND & 
    \parbox[c][2.6cm][c]{3cm}{\centering\includegraphics[width=1.8cm]{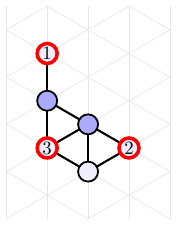}} &
    \texttt{000}, \texttt{010}, \texttt{100}, \texttt{111} & 3 &
    XOR &
    \parbox[c][3cm][c]{3cm}{\centering\includegraphics[width=1.8cm]{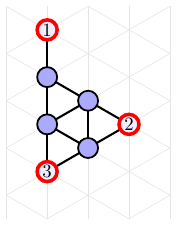}} &
    \texttt{000}, \texttt{011}, \texttt{101}, \texttt{110} & 4 \\ \hline
    
    NAND & 
    \parbox[c][2.6cm][c]{3cm}{\centering\includegraphics[width=2.1cm]{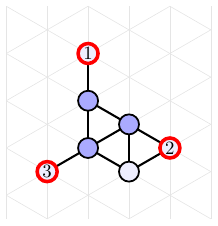}} &
    \texttt{001}, \texttt{011}, \texttt{101}, \texttt{110} & 4 &
    NOR &
    \parbox[c][2cm][c]{3cm}{\centering\includegraphics[width=2.1cm]{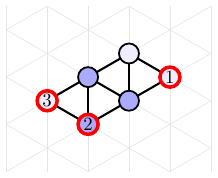}} &
    \texttt{001}, \texttt{010}, \texttt{100}, \texttt{110} & 3 \\ \hline
    
    OR & 
    \parbox[c][3cm][c]{3cm}{\centering\includegraphics[width=1.6cm]{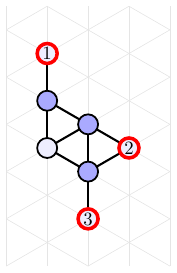}} &
    \texttt{000}, \texttt{011}, \texttt{101}, \texttt{111} & 4 &
     & & & \\ 
    \bottomrule
    \end{tabular}
    
    \begin{tablenotes}
    \footnotesize
    \item \includegraphics[width=0.5\linewidth]{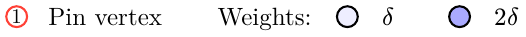} 
    \end{tablenotes}
    \end{threeparttable}
    \end{table}

\subsection{Code Availability and Usage}
We developed a Julia package, \textit{GadgetSearch.jl}~\cite{GadgetSearch2025}, for systematically searching gadgets in graph structures. The package provides tools to find weighted graphs whose MWIS solutions encode arbitrary logical constraints. It is suitable for TLSGs in this work.

Using this systematic search method, \Cref{tab:logic-mwis} shows examples of gadgets on TLSGs that implement the logical rules of AND, NAND, OR, NOR, and XOR gates. These gadgets are not unique. The code used to generate these gadgets is provided below; for detailed usage, please refer to the documentation.

\begin{lstlisting}[style=JuliaREPL]
    julia> using GadgetSearch, Gurobi, Combinatorics
    
    julia> const env = Gurobi.Env() # can be replaced with other solvers
    
    julia> truth_table = BitMatrix.([
        [0 0 0; 1 0 1; 0 1 1; 1 1 1],   # OR
        [0 0 0; 1 0 0; 0 1 0; 1 1 1],   # AND
        [0 0 1; 1 0 1; 0 1 1; 1 1 0],   # NAND = not(AND)
        [0 0 1; 1 0 0; 0 1 0; 1 1 0],   # NOR  = not(OR)
        [0 0 0; 1 0 1; 0 1 1; 1 1 0]    # XOR
    ])

    julia> generate_full_grid_udg(Triangular(), 2, 2; path="dataset.g6") 

    julia> dataloader = GraphLoader("dataset.g6") # you can provide your own graph datasets

    julia> results, failed = search_by_truth_tables(
        dataloader, 
        truth_table;
        optimizer=Gurobi.Optimizer, 
        env=env, 
        pin_candidates=collect(Combinatorics.combinations(1:4, 3)),  
        objective=x->sum(x)
    )
\end{lstlisting}

\section{\label{appsec:simulation}Numerical simulation setup}

To quantitatively evaluate the performance of our TLSG encoding, we simulate quantum annealing dynamics and benchmark against the established KSG encoding for the mapped $K_{2,3}$ graph. The simulations are designed to closely replicate experimental conditions in Ref.~\cite{Ebadi2022} while remaining computationally tractable. We employ the time-dependent variational principle (TDVP)~\cite{JuthoTDVP} with a matrix product state (MPS) ansatz.

Annealing performance is quantified by the \textit{violation rate} $p_\text{v}$, defined from the final measurement outcomes of $\ket{\psi_f}$ as
\begin{equation}
p_\text{v} \equiv \frac{1}{|E|} \sum_{\mathbf{n} \in \{0,1\}^{|V|}} P(\mathbf{n}) \sum_{(u,v) \in E} n_u n_v, \quad P(\mathbf{n}) = |\braket{\mathbf{n}}{\psi_f}|^2,
\label{eq:violation_rate}
\end{equation}
where $\mathbf{n}=(n_1,\dots,n_{|V|})$ denotes a bit-string configuration with $n_i\in\{0,1\}$ indicating whether vertex $v_i\in V$ is excited. Smaller $p_\text{v}$ corresponds to stronger enforcement of independence constraints.

The annealing protocol follows standard experimental practice~\cite{Ebadi2022}, employing an unoptimized piecewise-linear pulse (see main text Fig.~3(b)). All atoms are initialized in the ground state. The Rabi frequency $\Omega(\tau)$ ramps from 0 to $\Omega_\text{max}$ and back to 0, while the detuning $\Delta(\tau)$ sweeps from $-\Delta_\text{max}$ to $\Delta_\text{max}$ over the total annealing time $\tau \in [0,T]$. Key parameters are:
\begin{itemize}
\item Maximum Rabi frequency: $\Omega_\text{max} = 2\pi \times 4\,\text{MHz}$,
\item Maximum detuning: $\Delta_\text{max} = 5 \times \Omega_\text{max}$.
\end{itemize}

The blockade radius $R_b$ is defined as the distance at which the van der Waals interaction equals the maximum Rabi frequency, $C_6 / R_b^6 = \Omega_\text{max}$ with $C_6=2\pi\times 862690 \,\text{MHz}~\mu\text{m}^6$. To map the Rydberg blockade onto the unit-disk graph representation, we choose the lattice spacing $a$ (also the unit-disk radius) such that
\begin{equation}
R_b = \sqrt{R_{\min} r_{\max}},
\end{equation}
where $R_{\min}$ and $r_{\max}$ are the minimum distance between disconnected atoms and the maximum distance between connected atoms in the unit-disk graph. This sets the blockade radius at the geometric mean of these bounds, ensuring blockade for connected pairs while avoiding spurious long-range interactions~\cite{bloqade2023quera}. For the two lattice types considered, we have:
\begin{align*}
&\text{KSG:} && R_{\min}=2\,a, \quad r_{\max}=\sqrt{2}\,a \quad \Longrightarrow \quad a = 2^{-3/4} R_b, \\[2mm]
&\text{TLSG:} && R_{\min}=\sqrt{3}\,a, \quad r_{\max}=a \quad \Longrightarrow \quad a = 3^{-1/4} R_b.
\end{align*}

Time evolution is simulated using the time-dependent variational principle (TDVP)~\cite{JuthoTDVP} with a matrix product state (MPS) ansatz, a maximum bond dimension $\chi = 256$, and a time step of $0.1\,\Omega_\text{max}^{-1} \approx 0.004\,\mu\text{s}$. To accurately capture non-local interactions and geometric frustration in the Rydberg Hamiltonian, we employ a two-site TDVP scheme with third-order global subspace expansion (GSE)~\cite{GSETDVP}. After the bond dimension is saturated, we switch to one-site TDVP without GSE for computational efficiency.
\end{document}